\newtheorem{theorem}{Theorem}       
\newtheorem{proposition}[theorem]{Proposition}
\newtheorem{definition}[theorem]{Definition}
\newtheorem{remark}{Remark}
\newtheorem{statement}{Statement}
\title{\LARGE \bf
Learning Safety--Compatible Observers for Unknown Systems
}
\author{Juho Bae$^{*}$, Daegyeong Roh$^{*}$, and Han-Lim Choi$^{\dagger}$
\thanks{$^*$These authors equally contributed to the paper.}
\thanks{The authors are with the Department of Aerospace Engineering, KAIST, 34141, Daejon, South Korea,
        {\tt\small \{johnbae1901, dgroh2000, hanlimc\}@kaist.ac.kr}}%
\thanks{$^{\dagger}$Corresponding author.}%
}
\begin{document}

\maketitle
\thispagestyle{empty}
\pagestyle{empty}

\begin{abstract}
    This paper presents a data--driven approach for jointly learning a robust full--state observer and its robustness certificate for systems with unknown dynamics. Leveraging incremental input--to--state stability ($\delta$ISS) notions, we jointly learn a $\delta$ISS Lyapunov function that serves as the robustness certificate and prove practical convergence of the estimation error under standard fidelity assumptions on the learned models. This renders the observer \emph{safety--compatible}: they can be consumed by certificate--based safe controllers so that, when the controller tolerates bounded estimation error, the controller’s certificate remains valid under output feedback. We further extend the approach to interconnected systems via the small--gain theorem, yielding a distributed observer design framework. We validate the approach on a variety of nonlinear systems.
\end{abstract}
%
\section{Introduction}
%

%
%

Recent advances in learning--based control deliver strong performance on difficult control problems, but often at the cost of reduced interpretability and the absence of formal safety or stability guarantees. In response, new techniques in the field of \emph{certificate learning} have emerged to learn certificates alongside the controllers~\cite{dawson2023}. These include \emph{Lyapunov functions}~\cite{min2023} for stability, \emph{barrier functions}~\cite{peruffo2021automated} for positive invariance, and \emph{contraction metric}~\cite{tsukamoto2021} for incremental stability. Such methods have been extended to settings with partially or completely unknown nonlinear dynamics~\cite{jagtap2020control,min2023} and to high--dimensional networked systems~\cite{zhang2023compositional}.

However, safety or stability guarantees of most certificate--based controllers assume access to the true state, creating a gap between simulation and real--world application where the state must be inferred from measurements. One line of work addresses this gap by synthesizing measurement--robust barrier functions that allow bounded estimation error~\cite{dean2021guaranteeing}. Nevertheless, beyond controller--side robustness, reliable output--feedback still requires \emph{observers} that provide state estimates with quantified error bounds compatible with the controller’s certificate.
%
%

Constructing observers for nonlinear systems\textemdash{}particularly in settings analogous to those studied for controllers (e.g., partially or completely unknown dynamics)\textemdash{}remains an open problem with no universal constructive approach~\cite{bernard2022observer,peralez2024deep}. Ref.~\cite{agrawal2022safe} considers synthesizing an observer--controller interconnection for known control--affine systems using control barrier functions. Ref.~\cite{tsukamoto2020neural} develops a model--based approach that parameterizes a contraction metric with neural networks to obtain state--estimation schemes for autonomous systems. Many recent works are grounded in Kazantzis--Kravaris--Luenberger (KKL) theory~\cite{kazantzis1998nonlinear}, which maps the nonlinear system into an auxiliary higher--dimensional stable linear system and learns (or otherwise synthesizes) the coordinate transform. KKL--based observers have been studied via supervised learning~\cite{ramos2020numerical}, physics--informed neural networks~\cite{niazi2022learning}, neural ODEs~\cite{miao2023learning}, and switching among multiple Deep--KKL observers~\cite{peralez2024deep}. However, the above lines of work often (i) require full or partial model knowledge (e.g., control--affine structure), (ii) assume autonomous dynamics without control inputs, or (iii) do not provide explicit disturbance--to--estimation--error gain bounds. Moreover, apart from contraction metric--based approaches~\cite{tsukamoto2020neural}, there remains a need for data--driven certificates that can prove learned observers to be \emph{safety--compatible} with certificate--based controllers in safety--critical applications.
%
%

To this end, we adopt a certificate--learning viewpoint and propose a data--driven method that jointly learns (i) the dynamics model, (ii) a robust full--state observer, and (iii) an incremental input--to--state stability ($\delta$ISS) Lyapunov certificate. First, building on $\delta$ISS Lyapunov theory~\cite{angeli2002}, we introduce a hard--constrained neural network architecture~\cite{min2024} that enforces the $\delta$ISS decrease condition by construction, yielding explicit $\mathcal{L}_\infty$--gain bounds from input/output disturbances to the estimation error. Under standard fidelity assumptions on the learned models, we prove practical convergence of the estimate to the true state. Second, we extend the approach to a two--subsystem interconnection via small--gain results~\cite{jiang1994,angeli2002}, yielding a distributed design in which each local observer is synthesized without access to the other subsystem’s dynamics. While our analysis focuses on two subsystems, the same template suggests a path toward $N$--subsystem networks. A full treatment is left for future work. 
%
\section{Preliminaries} \label{sec:preliminaries}
\subsection{Incremental input--to--state stability}
Throughout this paper, we consider a nonlinear system 
\begin{equation} \label{eq:system}
    \dot{x}(t) = f(x(t), u(t)), \quad y(t) = h(x(t))
\end{equation}
with state $x \in \mathbb{R}^{n_{x}}$, output $y \in \mathbb{R}^{n_{y}}$, and input $u \in \mathcal{U} \subseteq \mathbb{R}^{n_{u}}$. $f: \mathbb{R}^{n_x} \times \mathcal{U} \mapsto \mathbb{R}^{n_x}$ and $h: \mathbb{R}^{n_x} \mapsto \mathbb{R}^{n_y}$ are assumed to be sufficiently smooth. Inputs are measurable and essentially bounded signals $u: [0, \infty) \mapsto \mathcal{U}$. We use $|\cdot|$ for the Euclidean norm of vectors, $\|\cdot\|_{\infty}$ for the supremum norm, and $\|v\|_{[0,t]}:=\sup_{s\in[0,t]}\|v(s)\|$. We denote the state solution of~\eqref{eq:system} with initial value ${\xi} \in \mathbb{R}^{n_{x}}$ and input signal $u(t)$ as $x(t, \xi, u)$. The comparison--function notations follow the standard definition~\cite{angeli2002}.
\begin{definition} \label{def:deltaiss}
    System~\eqref{eq:system} is \textit{incrementally input--to--state stable} ($\delta$ISS) with respect to input $u$ if there exist $\beta \in \mathcal{KL}$ and $\gamma \in \mathcal{K}_{\infty}$ such that for all $t \geq 0$, $\xi_1$ and $\xi_2 \in \mathbb{R}^{n_{x}}$, and input signals $u_1$ and $u_2$, the following inequality holds~\cite{angeli2002}:
    \begin{align}
        \left| x(t, \xi_1, u_1) - x(t, \xi_2, u_2) \right|  \leq \beta & \left( |\xi_1 - \xi_2|, t \right) \\
        & + \gamma\left( \|u_1 - u_2\|_{\infty} \right) \tag*{$\square$}
    \end{align}
\end{definition}
When the control restraint set $\mathcal{U}$ is compact, a necessary and sufficient condition for $\delta$ISS of system~\eqref{eq:system} is existence of the following \textit{$\delta$ISS Lyapunov function}, while it remains as a sufficient condition for non-compact $\mathcal{U}$~{\cite[Thm.~2]{angeli2002}}.
\begin{definition} \label{def:isslyapunov}
    A smooth map $V(x_1, x_2): \mathbb{R}^{n_x} \times \mathbb{R}^{n_x} \mapsto \mathbb{R}_{\geq 0}$ is called a \textit{$\delta$ISS Lyapunov function} if $\alpha_1(|x_1 - x_2|) \leq V(x_1, x_2) \leq \alpha_2(|x_1 - x_2|)$ for some $\alpha_1, \alpha_2 \in \mathcal{K}_{\infty}$, and there exists $\kappa \in \mathcal{K}_{\infty}$ and $\rho \in \mathcal{K}$ such that
    \begin{align}
        \kappa(|x_1 - x_2|) \geq |u_1 - u_2| \Rightarrow & \frac{\partial V}{\partial x_1} f(x_1, u_1) + \frac{\partial V}{\partial x_2} f(x_2, u_2) \notag \\
        & < -\rho(|x_1 - x_2|)
    \end{align}
    for all $u_1, u_2 \in \mathcal{U}$ and $x_1, x_2 \in \mathbb{R}^{n_x}$. \hfill $\square$
\end{definition}

We next recall a small--gain result for interconnected systems~\cite{angeli2002,jiang1994}.
\begin{theorem} [{\cite[Prop.~4.8]{angeli2002}}] \label{thm:interconnected}
    Consider the following interconnection of two subsystems
    \begin{subequations} \label{eq:interconnected_system}
    \begin{align}
        \dot{x}_a = f_a(x_a, x_b, u_a), \label{eq:interconnected_system_a}\\
        \dot{x}_b = f_b(x_a, x_b, u_b). \label{eq:interconnected_system_b}
    \end{align}
    \end{subequations}
    Here the input to the first subsystem is seen as $(x_b,u_a)$, and $(x_a,u_b)$ to the second. Denote by $x_{a}\left(t,\xi_{a},x_{b},u_a\right)$ (resp. $x_{b}\left(t,\xi_{b},x_{a},u_b\right)$) the solution of the first (resp. second) subsystem with initial state $\xi_a$ (resp. $\xi_b$) and input $(x_b, u_a)$ (resp. $(x_a, u_b)$). Suppose each subsystem is $\delta$ISS with respect to their respective inputs, i.e., there exist $\gamma_{x_b},\gamma_{x_a},\gamma_{u_a},\gamma_{u_b} \in \mathcal{K}_{\infty}$ and $\beta_{a},\beta_{b} \in \mathcal{KL}$ such that for initial states $\xi_{a}^1$, $\xi_{a}^2$, $\xi_{b}^1$, $\xi_{b}^2$, and inputs $x_a^1$, $x_a^2$, $x_b^1$, $x_b^2$, $u^1$, $u^2$, the following hold for all $t \geq 0$: 
    \begin{align}
        & \left| x_{a}\left(t,\xi_{a}^1,x_{b}^1,u^1\right) - x_{a}\left(t,\xi_{a}^2,x_{b}^2,u^2\right) \right| \leq \beta_{a}\left(\left|\xi_{a}^1-\xi_{a}^2\right|,t\right) \notag \\
        & \quad + \gamma_{x_b}\left(\left\|x_{b}^1-x_{b}^2\right\|_{\infty}\right) 
        + \gamma_{u_a}(\left\|u_a^1-u_a^2\right\|_{\infty}), \label{eq:subsystem_a_delta_iss} \\
        & \left| x_{b}\left(t,\xi_{b}^1,x_{a}^1,u^1\right) - x_{b}\left(t,\xi_{b}^2,x_{a}^2,u^2\right) \right| \leq \beta_{b}\left(\left|\xi_{b}^1-\xi_{b}^2\right|,t\right) \notag \\
        & \quad + \gamma_{x_a}\left(\left\|x_{a}^1-x_{a}^2\right\|_{\infty}\right) 
        + \gamma_{u_b}(\left\|u_b^1-u_b^2\right\|_{\infty}). \label{eq:subsystem_b_delta_iss}
    \end{align}
    If there exists $\rho \in \mathcal{K}_{\infty}$ satisfying the small-gain condition:
    \begin{equation}
    (\gamma_{x_b}+\rho)\circ(\gamma_{x_a}+\rho)(\eta) \leq \eta, 
    \quad \forall \eta \geq 0,
    \end{equation}
    then the system~\eqref{eq:interconnected_system} is $\delta$ISS with respect to $u$. \hfill $\square$
\end{theorem}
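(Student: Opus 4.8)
The plan is to run the classical trajectory--based small--gain argument, adapted to the incremental setting. First I would fix $\xi^j=(\xi_a^j,\xi_b^j)$ and $u^j=(u_a^j,u_b^j)$ for $j=1,2$, write $x_a^j,x_b^j$ for the components of the interconnected solution $x(\cdot,\xi^j,u^j)$, and set $e_a(t):=|x_a^1(t)-x_a^2(t)|$, $e_b(t):=|x_b^1(t)-x_b^2(t)|$. Because ODE solutions satisfy the cocycle (semigroup) property and depend causally on their inputs, estimate~\eqref{eq:subsystem_a_delta_iss} applied on an interval $[s,t]$ with the ``input'' $x_b^j$ truncated to $[s,t]$ yields
\begin{equation*}
  e_a(t) \le \beta_a(e_a(s),\,t-s) + \gamma_{x_b}(\|e_b\|_{[s,t]}) + \gamma_{u_a}(\|u^1-u^2\|_{\infty}),
\end{equation*}
and symmetrically for $e_b$ from~\eqref{eq:subsystem_b_delta_iss}. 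Taking $s=0$, passing to $\sup_{[0,t]}$, and using $\beta_a(r,\cdot)\le\beta_a(r,0)$, I obtain two coupled implicit inequalities for $\|e_a\|_{[0,t]}$ and $\|e_b\|_{[0,t]}$, each of the form ``$\mathcal{KL}$ term at $t=0$ $+$ cross gain of the other error $+$ gain of $\|u^1-u^2\|_\infty$''.

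Next I would close the loop: substitute the $e_b$ inequality into the $e_a$ one and use a weak triangle inequality $\gamma(p+q)\le\gamma((\mathrm{id}+\rho)(p))+\gamma((\mathrm{id}+\rho^{-1})(q))$ to split off a self--referential term dominated by $(\gamma_{x_b}+\rho)\circ(\gamma_{x_a}+\rho)(\|e_a\|_{[0,t]})$ from a remainder $c$ that is a $\mathcal{K}_\infty$ function of $|\xi^1-\xi^2|$ and $\|u^1-u^2\|_\infty$ alone. The small--gain hypothesis bounds that self--referential gain by $\mathrm{id}$, so the scalar inequality $\|e_a\|_{[0,t]}\le\sigma(\|e_a\|_{[0,t]})+c$ with $\sigma\le\mathrm{id}$ resolves into a bound $\|e_a\|_{[0,t]}\le\hat\beta(|\xi^1-\xi^2|,0)+\hat\gamma(\|u^1-u^2\|_\infty)$ that is uniform in $t$, and likewise for $\|e_b\|_{[0,t]}$; this is the incremental ``boundedness'' half of Definition~\ref{def:deltaiss}.

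To upgrade this to the full $\mathcal{KL}$ estimate I would combine the uniform bound with an asymptotic--gain estimate obtained by a time--shift argument: fixing $t$ and restarting both subsystems at time $T=t/2$ via the cocycle property, the loop--closing computation on $[T,t]$ now carries the decaying $\mathcal{KL}$ arguments $\beta_a(\cdot,t/2),\beta_b(\cdot,t/2)$, while the ``restart'' errors $e_a(T),e_b(T)$ are already controlled by the uniform bound; after the iteration needed to make \emph{both} subsystems' $\mathcal{KL}$ terms decay, this produces $e_a(t)+e_b(t)\le\hat\beta_2(|\xi^1-\xi^2|,t)+\hat\gamma(\|u^1-u^2\|_\infty)$. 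Since $|x(t,\xi^1,u^1)-x(t,\xi^2,u^2)|\le e_a(t)+e_b(t)$, a standard $\mathcal{KL}$--reconstruction lemma to tidy $\hat\beta_2$ into a genuine $\beta\in\mathcal{KL}$ gives exactly the estimate in Definition~\ref{def:deltaiss} for the interconnection~\eqref{eq:interconnected_system}.

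The hard part will be the bookkeeping when closing the loop: one must pick the auxiliary gains in the weak triangle inequalities so that the cross terms and the exogenous--input terms separate cleanly and the self--referential composition is precisely the one the small--gain condition controls, then check that the $+\rho$ slack makes $\sigma$ \emph{strictly} below $\mathrm{id}$ for positive argument so the implicit scalar bound inverts while preserving the $\mathcal{K}_\infty$ dependence on the data. The time--shift step is also delicate, since applied naively it leaves one subsystem's $\mathcal{KL}$ term undamped and loses the uniformity needed for a true $\mathcal{KL}$ estimate; the iteration together with the $\mathcal{KL}$--reconstruction lemma is what makes it rigorous.
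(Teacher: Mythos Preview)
The paper does not prove this theorem; it is quoted in the preliminaries as Proposition~4.8 of Angeli~(2002) and closed with a $\square$, with no argument given. So there is nothing in the paper to compare your attempt against.

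That said, your outline is the standard trajectory--based small--gain proof and is sound. For context, the route taken in the cited source is shorter: Angeli first shows that $\delta$ISS of a system is equivalent to ordinary ISS of the ``doubled'' system $\dot x_1=f(x_1,u_1),\ \dot x_2=f(x_2,u_2)$ with state $(x_1,x_2)$, input $(u_1,u_2)$, and output $x_1-x_2$. Applying that reduction to each subsystem converts hypotheses~\eqref{eq:subsystem_a_delta_iss}--\eqref{eq:subsystem_b_delta_iss} into ordinary ISS estimates for two interconnected doubled subsystems, and the conclusion is then a direct invocation of the classical ISS small--gain theorem of Jiang--Teel--Praly (which the present paper also cites as~\cite{jiang1994}). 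Your plan instead unfolds that invocation by hand on the incremental errors $e_a,e_b$; this is correct but re--derives the time--shift and $\mathcal{KL}$--reconstruction machinery that the reduction--then--cite approach gets for free. The one place to be careful in your write--up is exactly the point you flag: the $+\rho$ slack must be used to make the self--referential gain satisfy $(\mathrm{id}-\sigma)^{-1}\in\mathcal{K}_\infty$, not merely $\sigma\le\mathrm{id}$, or the implicit bound does not invert.
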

In Theorem~\ref{thm:interconnected}, $\gamma_{x_b}$ and $\gamma_{x_a}$ are the gain functions for the interconnection inputs $x_b$ and $x_a$, respectively, whereas $\gamma_{u_a}$ and $\gamma_{u_b}$ denote the gains associated with the inputs $u_a$ and $u_b$ for respective subsystems. Moreover, Theorem~2.1 in~\cite{jiang1994} leads to an extension of Theorem~\ref{thm:interconnected} to \emph{practically $\delta$ISS} systems: if each subsystem is practically $\delta$ISS\textemdash{}i.e., \eqref{eq:subsystem_a_delta_iss} and \eqref{eq:subsystem_b_delta_iss} hold with additional additive residuals $d_a,d_b \ge 0$ on the right--hand side\textemdash{}with small--gain condition~\eqref{eq:small_gain}, then the interconnection \eqref{eq:interconnected_system} is also practically $\delta$ISS.
%
\subsection{Robust observers} \label{subsec:robust_observers}
\begin{definition}[Robust full--state observer, {\cite{sontag1997}}]
    A system 
    \begin{equation}\label{eq:observer_system_disturbance}
    \dot z \;=\; f\bigl(z,\,u+d_u\bigr) \;+\; L\bigl(z,\,u+d_u,\,y - h(z) +d_y\bigr)
    \end{equation}
    is called a \textit{robust full--state observer} (with respect to input disturbance $d_u$ and output measurement disturbance $d_y$) with $z(t)\in\mathbb{R}^{n_x}$ as the estimate of $x(t)$ if the followings hold:
    \begin{description}[leftmargin=*,style=nextline]
    \item[(i)] for all $t\ge 0$,
    \begin{align}
    & \bigl|x(t,\xi,u)-z(t,\zeta,u+d_u,h(x)+d_y)\bigr| \\
    &\le \beta(|\xi-\zeta|,t) + \gamma_1(\|d_u\|_{[0,t]}) + \gamma_2(\|d_y\|_{[0,t]}), \notag
    \end{align}
    \item[(ii)] $L(z,u,0)=0$ for all $z,u$. \hfill $\square$
    \end{description}
\end{definition}

The second condition is merely a necessary consistency condition: in the special case of $\xi=\zeta$ and $d_u=d_y\equiv 0$, the observer dynamics must satisfy $x(t)\equiv z(t)$ for all $t$. The following sufficient condition for existence of such observer is stated and proved in~\cite{angeli2002}.
\begin{theorem} [{\cite[Prop.~6.1]{angeli2002}}]\label{thm:observer}
    If the system
    \begin{equation} \label{eq:observer_system}
    \dot z \;=\; f\bigl(z,\,u\bigr) \;+\; L\bigl(z,\,u,\,y - h(z) \bigr)
    \end{equation}
    is $\delta$ISS with respect to $(u, y)$ and $L(z, u, 0) \equiv 0$, then~\eqref{eq:observer_system} is a robust full--state observer for~\eqref{eq:system}. \hfill $\square$
\end{theorem}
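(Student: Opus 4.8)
The plan is to recognize the true state trajectory $x(\cdot,\xi,u)$ as one particular solution of the observer's own dynamics\textemdash{}namely, the solution driven by the \emph{uncorrupted} signals\textemdash{}and then to read off the estimation--error bound directly from the assumed $\delta$ISS property, which by definition compares any two solutions generated from any two input pairs. Write the perturbed observer vector field as $F(z,v_1,v_2):= f(z,v_1)+L\bigl(z,v_1,\,v_2-h(z)\bigr)$, so that \eqref{eq:observer_system_disturbance} reads $\dot z = F(z,\,u+d_u,\,y+d_y)$ with exogenous input pair $(v_1,v_2)=(u+d_u,\,y+d_y)$, while \eqref{eq:observer_system} reads $\dot z = F(z,u,y)$. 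Condition (ii) of the robust--observer definition is exactly the standing hypothesis $L(z,u,0)\equiv 0$, so only (i) requires work.

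First I would establish that the plant trajectory solves the observer ODE fed with the true pair $(u,y)$. Along the plant, $y(t)=h(x(t))$ identically; hence evaluating $F$ at $z=x$, $v_1=u$, $v_2=y$ gives $F(x,u,y)=f(x,u)+L(x,u,0)=f(x,u)=\dot x$. Thus $x(\cdot,\xi,u)$ is a solution of $\dot\chi=F(\chi,u,y)$ with $\chi(0)=\xi$ and input pair $\bigl(u,\,h(x(\cdot,\xi,u))\bigr)$; by uniqueness of solutions (smoothness of $f$, $h$, $L$) it is \emph{the} solution for that initial state and input. Meanwhile $z(\cdot,\zeta,u+d_u,h(x)+d_y)$ is, by construction, the solution of the same ODE $\dot\chi=F(\chi,v_1,v_2)$ with $\chi(0)=\zeta$ and input pair $(u+d_u,\,h(x)+d_y)$.

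Then I would apply Definition~\ref{def:deltaiss} to \eqref{eq:observer_system}, taking the ``input'' to be the pair $(v_1,v_2)$: there exist $\beta\in\mathcal{KL}$ and $\gamma\in\mathcal{K}_\infty$ with
\begin{align}
&\bigl|x(t,\xi,u) - z(t,\zeta,u+d_u,h(x)+d_y)\bigr| \notag\\
&\qquad\le \beta(|\xi-\zeta|,t) + \gamma\bigl(\|(d_u,d_y)\|_{\infty}\bigr), \notag
\end{align}
since the two input pairs differ precisely by $(d_u,d_y)$. It then remains to massage the gain term into the required shape: by causality the value of the solution at time $t$ depends only on the inputs restricted to $[0,t]$, so $\|(d_u,d_y)\|_{\infty}$ may be replaced by $\|(d_u,d_y)\|_{[0,t]}\le\|d_u\|_{[0,t]}+\|d_y\|_{[0,t]}$, and the weak triangle inequality $\gamma(a+b)\le\gamma(2a)+\gamma(2b)$ for class--$\mathcal{K}$ functions then yields inequality (i) with $\gamma_1(\cdot)=\gamma_2(\cdot):=\gamma(2\,\cdot)\in\mathcal{K}_\infty$. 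Together with (ii) this shows that \eqref{eq:observer_system_disturbance} is a robust full--state observer.

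The only genuinely delicate step is the first one: realizing that the plant trajectory is itself a solution of the observer dynamics when fed the true $(u,y)$ pair\textemdash{}which is exactly where the consistency condition $L(z,u,0)\equiv 0$ is indispensable\textemdash{}and checking that the ``output input'' signal $y=h(x(\cdot,\xi,u))$ is well defined so that a uniqueness argument legitimately identifies it with $x(\cdot,\xi,u)$. One should additionally verify that the perturbed signals $u+d_u$ and $h(x)+d_y$ remain admissible inputs on which the assumed $\delta$ISS estimate is valid; the remaining manipulations (splitting the combined gain, the causal replacement of $\|\cdot\|_{\infty}$ by $\|\cdot\|_{[0,t]}$) are routine.
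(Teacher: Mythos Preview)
Your proof is correct, and it is essentially the standard argument for this result. Note, however, that the paper does not supply its own proof of Theorem~\ref{thm:observer}: it is stated with the citation to~\cite[Prop.~6.1]{angeli2002} and closed with a $\square$, the text immediately preceding it saying explicitly that the result ``is stated and proved in~\cite{angeli2002}.'' So there is no in--paper proof to compare against; your argument simply reconstructs the one the paper defers to.
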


Theorem~\ref{thm:observer} implies that, to construct a robust observer against disturbances on $(u,y)$, it suffices to design an \emph{output injection} $L(z,u,y-h(z))$ that renders \eqref{eq:observer_system} $\delta$ISS.

Now consider again the interconnected system \eqref{eq:interconnected_system}. Assume each subsystem has outputs $y_a=h_a(x_a)$ and $y_b=h_b(x_b)$, and inputs $(x_b, u_a)$ and $(x_a, u_b)$, respectively. The overall plant takes $u \coloneqq (u_1, u_2)$ as input and produces the output $\bigl(h_a(x_a),\,h_b(x_b)\bigr)$. We construct a \emph{distributed observer} for \eqref{eq:interconnected_system} by combining Theorem~\ref{thm:interconnected} with Theorem~\ref{thm:observer}, where \emph{distributed} here means that each subsystem runs its own local observer and uses the other’s estimate.
\paragraph*{\textbf{Step A} (local design with exogenous interconnection signals)}
For an exogenous signal $w_b(\cdot)$, define the local observer of the form in Theorem~\ref{thm:observer},
\begin{equation}\label{eq:local_observer_a}
    \dot z_a \;=\; f_a\bigl(z_a,\,w_b,\,u_a\bigr) \;+\;
    L_a\!\bigl(z_a,\,w_b,\,u_a,\,y_a-h_a(z_a)\bigr),
\end{equation}
and suppose it is $\delta$ISS with respect to $(u_a,y_a,w_b)$. Define $z_b$ analogously using an exogenous $w_a(\cdot)$:
\begin{equation}\label{eq:local_observer_b}
\dot z_b \;=\; f_b\bigl(w_a,\,z_b,\,u_b\bigr) \;+\;
L_b\!\bigl(w_a,\,z_b,\,u_b,\,y_b-h_b(z_b)\bigr),
\end{equation}
and suppose it is $\delta$ISS with respect to $(u,y_b,w_a)$.
\paragraph*{\textbf{Step B} (interconnection)}
In the coupled implementation we take $w_b(\cdot)=z_b(\cdot)$ and $w_a(\cdot)=z_a(\cdot)$, yielding
\begin{subequations} \label{eq:distributed_observer}
\begin{align}
\dot{z}_a &= f_a(z_a, z_b, u_a) + L_a\bigl(z_a, z_b, u_a, y_a - h_a(z_a)\bigr), \label{eq:distributed_observer_a}\\
\dot{z}_b &= f_b(z_a, z_b, u_b) + L_b\bigl(z_a, z_b, u_b, y_b - h_b(z_b)\bigr). \label{eq:distributed_observer_b}
\end{align}
\end{subequations}
If, hypothetically, $z_b(\cdot)\equiv x_b(\cdot)$ and there are no disturbances acting on $u_a$ and $y_a$, then by Theorem~\ref{thm:observer} the local observer \eqref{eq:distributed_observer_a} satisfies $z_a(t)\to x_a(t)$ as $t\to\infty$. The analogous statement holds for \eqref{eq:distributed_observer_b}. In practice, perfect knowledge of the other subsystem’s state is not available, but each local observer only receives the other’s \emph{estimate}. However, if one can construct output injections $L_a$ and $L_b$ so that the local observers \eqref{eq:distributed_observer_a}--\eqref{eq:distributed_observer_b} satisfy the small--gain condition in Theorem~\ref{thm:interconnected}, then the interconnected observer \eqref{eq:distributed_observer} is $\delta$ISS with respect to $(u_a, u_b, y_a, y_b)$. Consequently, applying Theorem~\ref{thm:observer} to the interconnected plant~\eqref{eq:interconnected_system} and observer~\eqref{eq:distributed_observer} shows that \eqref{eq:distributed_observer} is a robust full--state observer for \eqref{eq:interconnected_system}, with robustness against input disturbances on $(u_a, u_b)$ and output disturbances on $(y_a,y_b)$. It is noteworthy that construction of $L_a$ and $L_b$ relies only on local models together with bounds on the interconnection gains. Explicit knowledge of the other subsystem’s dynamics is not required for the local design.
\subsection{Hard--constrained neural networks}
Enforcing \emph{input-dependent output constraints}\textemdash{}for each input $x$, the output must satisfy $f(x)\in\mathcal{C}(x)$\textemdash{}is crucial in safety--critical applications. Penalizing violations does not guarantee constraint satisfaction, which motivated architectures that enforce constraints \emph{by construction} (e.g.,~\cite{donti2021}); however, systematic guarantees on representational capacity\textemdash{}most notably, universal approximation under hard constraints\textemdash{}have been limited.

HardNet~\cite{min2024} addresses this by appending a differentiable, closed--form enforcement (projection) layer to the output of an unconstrained network, so constraints are satisfied \emph{by construction} while \emph{retaining universal approximation}. While HardNet supports more general convex constraints, for brevity we only present the affine case here.

Suppose the target function $f: D\subset\mathbb{R}^{n_x}\to\mathbb{R}^{n_{\text{out}}}$ must satisfy the affine constraints
\begin{equation}\label{eq:affine_constraints}
  b_1(x)\ \le\ A(x)\,f(x)\ \le\ b_2(x),\qquad \forall x\in\mathcal{X},
\end{equation}
where $A(x)\in\mathbb{R}^{n_c\times n_{\text{out}}}$ and $b_1(x),b_2(x)\in\mathbb{R}^{n_c}$ are continuous, so that there are $2n_c$ scalar inequalities. The domain $D$ is assumed compact. Assume the constraints are feasible for all $x \in D$ and that $A(x)$ has full row rank (hence $n_c\le n_{\text{out}}$). Then HardNet enforces~\eqref{eq:affine_constraints} by applying a \emph{projection}
\begin{align} \label{eq:hardnet_aff}
  \mathcal{P}(f_{\theta})(x) \coloneqq f_{\theta}(x)
  &+ A(x)^{\dagger}\!\left[\operatorname{ReLU}\!\bigl(b_1(x)-A(x)f_{\theta}(x)\bigr)\right. \notag\\
  &\left. - \operatorname{ReLU}\!\bigl(A(x)f_{\theta}(x)-b_2(x)\bigr)\right],
\end{align}
for all \(x \in D\), where $A^{\dagger}\equiv A^\top(AA^\top)^{-1}$ denotes the right pseudoinverse. This adjusts the violated components by moving the output in directions parallel to the boundaries of already--satisfied constraints, while leaving already--satisfied inequalities unchanged (cf.\ Proposition 4.4 in~\cite{min2024}). Moreover, if a base class of neural networks $\mathcal{F}_{\text{NN}}$ universally approximates $C(D,\mathbb{R}^{n_{\text{out}}})$ (or $L^p$), then the class $\mathcal{F}_{\text{HardNet}} \coloneqq \bigl\{\mathcal{P}(f_\theta)\,:\,f_\theta\in\mathcal{F}_{\text{NN}}\bigr\}$ universally approximates the subset of continuous (respectively $L^p$) functions that satisfy~\eqref{eq:affine_constraints} (Theorem~4.6 in~\cite{min2024}).

\cite{kolter2019} applied the projection~\eqref{eq:hardnet_aff} to meet the standard Lyapunov decrease condition for stability as follows:~\eqref{eq:hardnet_aff} is applied to a nominal dynamics model $\hat{f}$ to obtain $f^* \coloneqq \mathcal{P}(\hat{f})$ so that $\frac{dV}{dt} = \nabla V(x)^{\top} f^*(x) \leq -\alpha V(x)$ is guaranteed. Here, $\nabla V(x)^\top$ is interpreted as $A(x)$ in \eqref{eq:affine_constraints} and $-\alpha V(x)$ as $b_2(x)$. Then, supervised learning based on input--output data of $f$ is conducted on $f^*$. \cite{min2023} applies the same projection to the closed loop dynamics under controller setting. This facilitates the following statement: 
%
%
\begin{statement} \label{statement:01}
    If $\nabla V \neq 0$ except at $x = 0$, then the learned Lyapunov function guarantees stability for a surrogate system that is close to the true dynamics $f$ in the $\mathcal{L}_{\infty}$ sense, namely the learned dynamics $f^*$.
\end{statement}
This reduces the post-verification of certificate (i.e., satisfaction of the inequality constraint) to verifying $\nabla V \neq 0$. In Statement~\ref{statement:01}, \emph{closeness} is a physically interpretable quantity (same units as $f$), not a penalty weight--dependent value. By contrast, in soft--penalty training, violation magnitudes depend on arbitrary scalings of $V$ and on penalty weights, undermining physical interpretability as in Statement~\ref{statement:01}. The biggest benefit of applying HardNet is that the search space of the dynamics model is restricted to stable ones, without losing uniform approximator property.
%
%
\section{Problem Statement} \label{sec:prob_statement}
In Section~\ref{sec:preliminaries} we introduced $\delta$ISS notion on $\mathbb{R}^{n_x}\times\mathcal{U}$. In practice, learning and certification are carried out on a compact domain, so we fix a compact set $\mathcal{X}\subset\mathbb{R}^{n_x}$ and assume $\mathcal{U}$ compact. The dynamics $f:\mathcal X\times\mathcal U\mapsto\mathbb{R}^{n_x}$ is unknown, but we have data $D=\{(x_i,u_i,\dot x_i)\}_{i=1}^N$ of input-output measurements of $f$. The output map $h:\mathcal{X}\!\mapsto\!\mathbb{R}^{n_y}$ is known. Our goal is to jointly learn (i) a dynamics model $f^*$, (ii) an output injection $L^*$ so that the observer 
\begin{equation} \label{eq:observer_learned}
    \dot{z} = f^*(z, u) + L^*(z, u, y)
\end{equation}
%
%
%
is $\delta$ISS with respect to $(u,y)$. Because $h$ is known, we write the injection as $L^*(z,u,y)$ (rather than $L^*(z,u,y-h(z))$) and impose the consistency condition as $L^*(z, u, h(z)) \equiv 0$ for all $(z, u) \in \mathcal{X} \times \mathcal{U}$ so that $z(t)$ estimates the true state $x(t)$. To certify that \eqref{eq:observer_learned} is $\delta$ISS, we jointly learn a $\delta$ISS Lyapunov function $V:\mathcal X\times\mathcal X\to\mathbb{R}_{\ge 0}$ satisfying the following for some $\alpha,\varepsilon>0$ and $\rho_u,\rho_y\ge 0$.
\begin{equation} \label{eq:lyapunov_zero}
    V(z_1, z_2) = 0 \iff z_1 = z_2,
\end{equation}
\begin{equation} \label{eq:lyapunov_lb}
    V(z_1, z_2) \geq \varepsilon |z_1 - z_2|^2,
\end{equation}
\begin{align} \label{eq:lyapunov_decrease}
    \dot{V} &= \partial_{z_1}V^\top \left( f^*(z_1, u_1) + L^*(z_1, u_1, y_1) \right) \notag \\
    & \quad + \partial_{z_2}V^\top  \left( f^*(z_2, u_2) + L^*(z_2, u_2, y_2) \right) \\
    & \leq -\alpha V(z_1, z_2) + \rho_u|u_1 - u_2|^2 + \rho_y|y_1-y_2|^2. \notag
\end{align}
On a compact domain, these are sufficient conditions for the $\delta$ISS Lyapunov inequalities (cf. Definition~\ref{def:isslyapunov}) applied to the observer dynamics with inputs $(u,y)$. The next proposition yields $\mathcal L_\infty$ gains with respect to disturbances on $(u, y)$.
\begin{proposition} \label{prop:disturbance_gain}
    Under the decrease condition \eqref{eq:lyapunov_decrease}, $\forall t \ge 0$,
    \begin{align}
        \left|z_1(t) - z_2(t)\right| \leq \sqrt{V(0)/\varepsilon} \ & e^{-\frac{\alpha t}{2}} + \sqrt{\rho_u / \varepsilon \alpha} \ \|u_1 - u_2\|_{\infty} \notag \\
        & + \sqrt{\rho_y / \varepsilon \alpha} \ \|y_1 - y_2\|_{\infty}
    \end{align}
    where $V(0):=V\bigl(z_1(0),z_2(0)\bigr)$.
\end{proposition}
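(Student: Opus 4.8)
\textit{Proof sketch.} The plan is to reduce the statement to a scalar linear differential inequality in the Lyapunov value and then invoke the quadratic lower bound \eqref{eq:lyapunov_lb}. First I would introduce the shorthand $V(t):=V\bigl(z_1(t),z_2(t)\bigr)$ and the disturbance signal
\begin{equation*}
w(t):=\rho_u\,|u_1(t)-u_2(t)|^2+\rho_y\,|y_1(t)-y_2(t)|^2,
\end{equation*}
which is essentially bounded by $\bar w:=\rho_u\|u_1-u_2\|_\infty^2+\rho_y\|y_1-y_2\|_\infty^2$ since the $u_i,y_i$ are measurable and essentially bounded. Because $V$ is smooth and the observer solutions $z_1,z_2$ are absolutely continuous, $t\mapsto V(t)$ is absolutely continuous, and the decrease condition \eqref{eq:lyapunov_decrease} evaluated along the two trajectories gives $\dot V(t)\le -\alpha V(t)+w(t)$ for almost every $t\ge 0$.

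Next I would integrate this inequality. Multiplying by the integrating factor $e^{\alpha t}$ shows that $t\mapsto e^{\alpha t}V(t)-\int_0^t e^{\alpha s}w(s)\,ds$ is nonincreasing, hence
\begin{equation*}
V(t)\ \le\ e^{-\alpha t}V(0)+\int_0^t e^{-\alpha(t-s)}w(s)\,ds\ \le\ e^{-\alpha t}V(0)+\frac{\bar w}{\alpha},
\end{equation*}
where the last step uses $\int_0^t e^{-\alpha(t-s)}\,ds=(1-e^{-\alpha t})/\alpha\le 1/\alpha$ together with $w(s)\le\bar w$ (equivalently, one may cite the comparison lemma). Combining this with the lower bound \eqref{eq:lyapunov_lb}, $\varepsilon|z_1(t)-z_2(t)|^2\le V(t)$, yields
\begin{equation*}
|z_1(t)-z_2(t)|^2\ \le\ \frac{V(0)}{\varepsilon}\,e^{-\alpha t}+\frac{\rho_u}{\varepsilon\alpha}\|u_1-u_2\|_\infty^2+\frac{\rho_y}{\varepsilon\alpha}\|y_1-y_2\|_\infty^2.
\end{equation*}
Taking square roots and using subadditivity of $\sqrt{\cdot}$ on $\mathbb{R}_{\ge0}$ (i.e.\ $\sqrt{a+b+c}\le\sqrt a+\sqrt b+\sqrt c$) together with $\sqrt{e^{-\alpha t}}=e^{-\alpha t/2}$ gives the claimed bound.

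The only real subtlety is the passage from the pointwise (a.e.) differential inequality to its integrated form when the disturbance signals are merely measurable and essentially bounded; this is handled either by the integrating-factor monotonicity argument above or by a direct appeal to the comparison lemma, and everything else is a routine estimate. I would also note that $\|u_1-u_2\|_\infty$ and $\|y_1-y_2\|_\infty$ may be replaced by the causal quantities $\|u_1-u_2\|_{[0,t]}$ and $\|y_1-y_2\|_{[0,t]}$, since only $w(s)$ for $s\le t$ enters the integral; the stated form follows a fortiori.
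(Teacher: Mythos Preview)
Your proof is correct and follows essentially the same route as the paper: integrate the scalar differential inequality $\dot V\le -\alpha V+w(t)$ (the paper invokes Gr\"onwall's inequality directly, while you spell out the integrating-factor version), bound the convolution by $\bar w/\alpha$, apply the quadratic lower bound \eqref{eq:lyapunov_lb}, and use subadditivity of the square root. Your additional remarks on absolute continuity and the causal refinement $\|\cdot\|_{[0,t]}$ are sound and slightly more careful than the paper's presentation.
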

\begin{proof}
    See Appendix~\ref{appendix:a}.
\end{proof}
Note that the decrease condition in~\eqref{eq:lyapunov_decrease} is stated for the learned model $f^*$ rather than the (unknown) true dynamics $f$. Robustness to model error is analyzed later.

In summary, we jointly learn (i) the dynamics model $f^*$, (ii) an output injection $L^*$ to obtain a robust observer, and (iii) a $\delta$ISS Lyapunov function $V$ that certifies robustness in the $\delta$ISS sense.

For the two--subsystem interconnection in~\ref{eq:interconnected_system}, let us assume that datasets $\mathcal{D}_a = \{ (x_{a_i}, x_{b_i}, u_{a_i}, \dot{x}_{a_i}) \}_{i=1:N_a}$ and $\mathcal{D}_b = \{ (x_{a_i}, x_{b_i}, u_{b_i}, \dot{x}_{b_i}) \}_{i=1:N_b}$ are given for $f_a$ and $f_b$, respectively. Output maps $y_a = h_a(x_a)$ and $y_b = h_b(x_b)$ are known. We construct local observers (and $\delta$ISS Lyapunov functions $V_a,V_b$) analogously to the non-interconnected case: 
\begin{subequations} \label{eq:learned_distributed_observer}
\begin{align}
\dot{z}_a &= f_a^*(z_a, z_b, u_a) + L_a^*\bigl(z_a, z_b, u_a, y_a\bigr), \label{eq:learned_distributed_observer_a} \\
\dot{z}_b &= f_b^*(z_a, z_b, u_b) + L_b^*\bigl(z_a, z_b, u_b, y_b\bigr). \label{eq:learned_distributed_observer_b}
\end{align}
\end{subequations}
Here, $f_a^*$ and $f_b^*$ are learned dynamics models and $L_a^*$ and $L_b^*$ are the corresponding output injections that render each local observer systems $\delta$ISS. We impose the consistency conditions $L_a^*(z_a,z_b,u_a,h_a(z_a))\equiv 0$ and $L_b^*(z_a,z_b,u_b,h_b(z_b))\equiv 0$. For the learned local observer \eqref{eq:learned_distributed_observer_a}, suppose $V_a$ is trained to satisfy 
\begin{equation}
    V_a(z_a^1,z_a^2)\;\ge\;\varepsilon_a\left|z_a^1-z_a^2\right|^2,
\end{equation}
\begin{align}
    \dot V_a \!\leq\! -\alpha_a\! V_a \!+\! \rho_{u_a}\!\left|u_a^1 \!-\! u_a^2\right|^2 \!+\! \rho_{y_a}\!\left|y_a^1 \!-\! y_a^2\right|^2 \!+\! \rho_{z_b}\!\left|z_b^1 \!-\! z_b^2\right|^2,
\end{align}
for $\alpha_a,\varepsilon_a>0$, $\rho_{u_a},\rho_{y_a},\rho_{z_b}\ge 0$. Then, by Proposition~\ref{prop:disturbance_gain}, the $\mathcal L_\infty$ gain from $z_b^1-z_b^2$ to $z_a^1-z_a^2$ is $\sqrt{\rho_{z_b}/(\varepsilon_a\alpha_a)}$. Analogously, for \eqref{eq:learned_distributed_observer_b} we obtain the gain $\sqrt{\rho_{z_a}/(\varepsilon_b\alpha_b)}$. If 
\begin{equation} \label{eq:small_gain}
    \sqrt{\frac{\rho_{z_b}}{\varepsilon_a\alpha_a}\cdot\frac{\rho_{z_a}}{\varepsilon_b\alpha_b}} \;<\; 1,
\end{equation}
then, by Theorem~\ref{thm:interconnected}, the interconnected observer \eqref{eq:learned_distributed_observer} is $\delta$ISS with respect to the inputs $(u_a, u_b, y_a, y_b)$. In summary, each local observer is learned as in the single--system case, without any knowledge on the other subsystem. The only information shared during construction of local observers is the small--gain condition \eqref{eq:small_gain} for parameter setting.
%
\section{Joint Learning of Dynamics, Observer, and $\delta$ISS Lyapunov Function} \label{sec:architecture}
In this section, we present and analyze the architecture for constructing a robust full--state observer. The same approach applies to the construction of local observers for interconnected systems.
\subsection{Dynamics, output injection, and $\delta$ISS Lyapunov function}
Prior work~\cite{kolter2019,min2023} on stability enforces the Lyapunov decrease condition by projecting a nominal dynamics model through~\eqref{eq:hardnet_aff}. Here, we present how projection~\eqref{eq:hardnet_aff} can be applied to~\eqref{eq:lyapunov_decrease} and lead to arguments such as Statement~\ref{statement:01}.

In inequality~\eqref{eq:lyapunov_decrease}, the left--hand side is affine in both $L^*$ and $f^*$. In principle one could enforce the constraint by correcting either component. In our approach we adjust $L^*$ rather than $f^*$. The rationale is straightforward: since $f^*$ is directly trained on $\mathcal D$, we consider $L^*$ as the adjustable component for enforcing the inequality, not $f^*$. Accordingly, we model the dynamics with an unconstrained neural network $f^*:\mathbb{R}^{n_x}\times\mathbb{R}^{n_u}\to\mathbb{R}^{n_x}$ trained on $\mathcal D$. We define the output injection as 
\begin{equation} \label{eq:lstar}
    L^*(x, u, y) = g_{L}(x, u, y) - g_{L}(x, u, h(x))
\end{equation}
which enforces the consistency condition $L^*(z,u,h(z))\equiv 0$, where $g_L:\mathbb{R}^{n_x+n_u+n_y}\to\mathbb{R}^{n_x}$ is an unconstrained network. For the $\delta$ISS Lyapunov function, we set as follows to meet~\eqref{eq:lyapunov_zero} and inequality~\eqref{eq:lyapunov_lb}.
%
\begin{equation}
    V(x_1, x_2) = \sigma\!\left( g_V(x_1, x_2) - g_V(x_1, x_1) \right) + \varepsilon_V \! \left| x_1 - x_2 \right|^2,
\end{equation}
for $\varepsilon_V > 0$, $g_V: \mathbb{R}^{2n_x} \mapsto \mathbb{R}$ an unconstrained network, and a nonnegative map $\sigma: \mathbb{R} \mapsto \mathbb{R}_{\geq 0}$ such that $\sigma(s) = 0$ if $s \leq 0$. It is noteworthy that the parametrization must be capable of representing the \emph{gradient} of $V$ accurately, thus requiring $\sigma$ and the activation functions for $g_V$ to be at least $C^1$. An example of such $\sigma$ is the smoothed ReLU function in~\cite{kolter2019}.
\subsection{Decreasing condition of $\delta$ISS Lyapunov function}
We present a HardNet variant that operates on the \emph{pairwise} $\delta$ISS Lyapunov decrease constraint rather than on the dynamics itself. A schematic illustration of the two steps (projection to enforce $\delta$ISS Lyapunov function to decrease, and a subsequent step for consistency condition) is shown in Figure~\ref{fig:diagram}. Suppose there exists an ideal injection $L$ that renders $\dot z=f^*(z,u)+L(z,u,y)$ $\delta$ISS. If one approximates $L$ by an unconstrained network $\hat L$, it is natural to restrict $\hat{L}$ to maintain the Lyapunov decrease condition~\eqref{eq:lyapunov_decrease}. The difficulty is that~\eqref{eq:lyapunov_decrease} couples \emph{pairs} of points $(z_1,u_1,y_1)$ and $(z_2,u_2,y_2)$, unlike the pointwise affine constraint in~\eqref{eq:affine_constraints}. To apply projection under the pairwise structure of \eqref{eq:lyapunov_decrease}, we embed the constraint by defining
\begin{equation}
    \tilde{L}(x_1, u_1, y_1, x_2, u_2, y_2) \! \coloneqq \!
    \begin{bmatrix}
        \hat{L}_1(x_1, u_1, y_1, x_2, u_2, y_2) \\
        \hat{L}_2(x_1, u_1, y_1, x_2, u_2, y_2)
    \end{bmatrix}, 
\end{equation}
where $\hat L_1,\hat L_2:\mathbb{R}^{2(n_x+n_u+n_y)} \!\!\to\!\!\mathbb{R}^{n_x}$ are unconstrained networks. Let $\nabla V(z_1,z_2)=\left[\partial_{z_1}V(z_1,z_2)^\top, \ \partial_{z_2}V(z_1,z_2)^\top\right]\in\mathbb{R}^{1\times 2n_x}$. Consider the following pointwise inequality in embedded space,
\begin{align} \label{eq:embedded_ineq}
    & \nabla V(z_1, z_2) \Big( \begin{bmatrix}
        f^*(x_1, u_1) \\
        f^*(x_2, u_2)
    \end{bmatrix} + \tilde{L}(z_1, u_1, y_1, z_2, u_2, y_2) \Big) \notag \\
    & \leq -\alpha V(z_1, z_2) + \rho_u|u_1 - u_2|^2 + \rho_y|y_1-y_2|^2.
\end{align}
If $\hat L_1(\cdot)\equiv L(z_1,u_1,y_1)$ and $\hat L_2(\cdot)\equiv L(z_2,u_2,y_2)$, then~\eqref{eq:embedded_ineq} reduces to~\eqref{eq:lyapunov_decrease}. Writing $\chi_i:=(z_i,u_i,y_i)$, $i \in \{1, 2\}$, inequality~\eqref{eq:embedded_ineq} is equivalent to
\begin{align} \label{eq:lyapunov_decrease_short}
    \nabla V(z_1, z_2) \tilde{L}(\chi_1, \chi_2) \leq \Lambda(\chi_1, \chi_2)
\end{align}
where $\Lambda(\chi_1, \chi_2) \coloneqq -\nabla V(z_1, z_2) \begin{bmatrix}
    f^*(x_1, u_1) \\
    f^*(x_2, u_2)
\end{bmatrix} + \rho_u|u_1 - u_2|^2 + \rho_y|y_1-y_2|^2$. Applying projection~\eqref{eq:hardnet_aff} to $\tilde L$ yields
\begin{align} \label{eq:L_projection}
    &\tilde{L}^*(\chi_1, \chi_2) \coloneqq \tilde{L}(\chi_1, \chi_2) \\
    &- \nabla V(z_1, z_2)^\top \frac{\text{ReLU}(\nabla V(z_1, z_2) \tilde{L}(\chi_1, \chi_2) - \Lambda(\chi_1, \chi_2))}{\|\nabla V(z_1, z_2)\|^2}, \notag
\end{align}
which guarantees~\eqref{eq:lyapunov_decrease_short} provided $\nabla V(z_1,z_2)\neq 0$ for $z_1 \neq z_2$. Denoting the components by $\tilde{L}^*(\chi_1, \chi_2) = \begin{bmatrix}
    L_1^*(\chi_1, \chi_2)^\top ,\; L_2^*(\chi_1, \chi_2)^\top
\end{bmatrix}^\top$, the interpretation of applying projection~\eqref{eq:hardnet_aff} in embedded space is as follows. First, the resulting coupled system
\begin{align} \label{eq:coupled_observer}
    \dot{z}_1 &= f^*(z_1, u_1) + L_1^*(\chi_1, \chi_2), \notag \\
    \dot{z}_2 &= f^*(z_2, u_2) + L_2^*(\chi_1, \chi_2)
\end{align}
satisfies 
\begin{align} \label{eq:lyapunov_embedded}
    \dot{V}(z_1, z_2) &= \partial_{z_1}V^\top \left( f^*(z_1, u_1) + L_1^*(\chi_1, \chi_2) \right) \notag \\
    & \quad + \partial_{z_2}V^\top \left( f^*(z_2, u_2) + L_2^*(\chi_1, \chi_2) \right) \\
    & \leq -\alpha V(z_1, z_2) + \rho_u|u_1 - u_2|^2 + \rho_y|y_1-y_2|^2, \notag
\end{align}
implying that $z_1$ and $z_2$ converge to each other as in Proposition~\ref{prop:disturbance_gain}. Moreover, any continuous pair $(L_1^*,L_2^*)$ satisfying~\eqref{eq:lyapunov_embedded} can be approximated arbitrarily well as $\hat L_1,\hat L_2$ are universal approximators. Thus, for an ideal injection $L$, the special case $L_1^*(\chi_1,\chi_2)\equiv L(\chi_1)$ and $L_2^*(\chi_1,\chi_2)\equiv L(\chi_2)$ lies in the search space. Conversely, if $L^*$ in~\eqref{eq:lstar} satisfies $L_1^*(\chi_1,\chi_2)\equiv L^*(\chi_1)$ and $L_2^*(\chi_1,\chi_2)\equiv L^*(\chi_2)$, then~\eqref{eq:lyapunov_embedded} implies that $L^*$ meets the inequality~\eqref{eq:lyapunov_decrease}, becoming the desired output injection.
\begin{figure}
    \centering
    \includegraphics[width=0.55\linewidth]{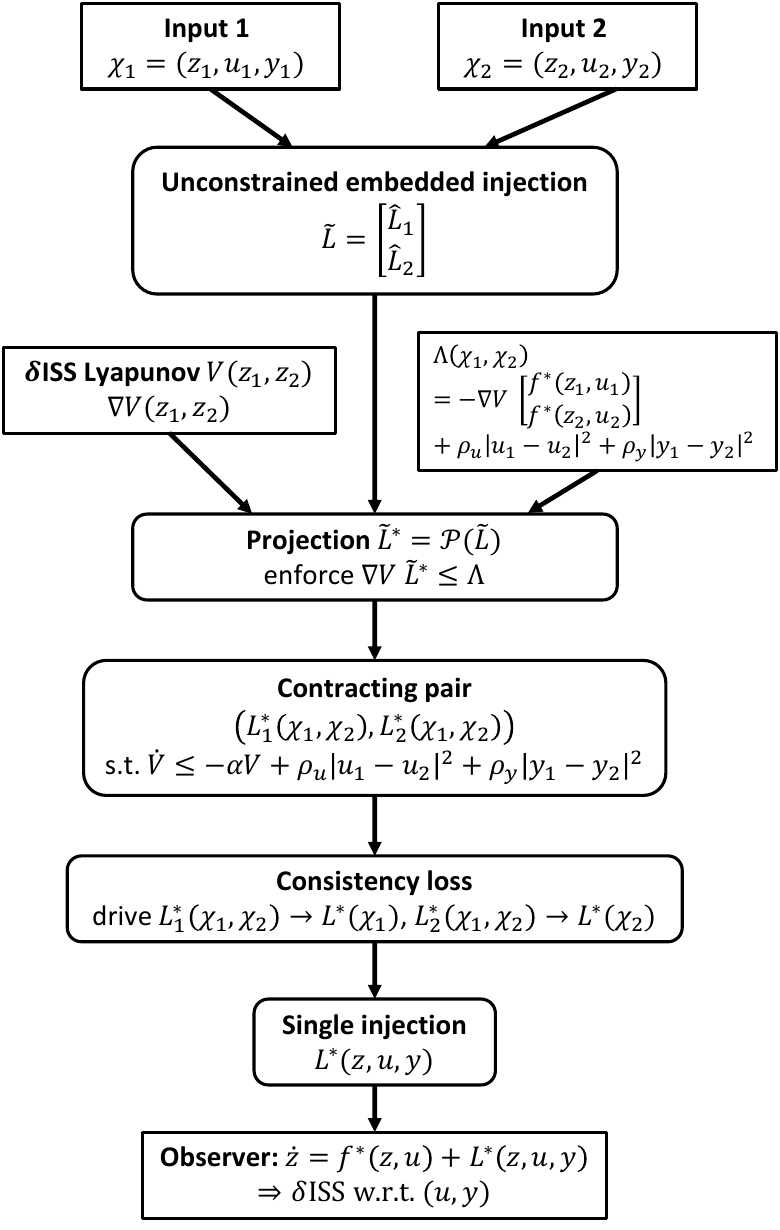}
    \caption{Two-step procedure. An unconstrained embedded injection $\tilde{L}$ is projected to satisfy $\nabla V \tilde{L} \leq \Lambda$, yielding a contracting pair $(L_1^*, L_2^*)$. A consistency loss then drives $L_1^*, L_2^* \rightarrow L^*$.}
    \label{fig:diagram}
\end{figure}
\subsection{Loss function}
When jointly learning $f^*$, $V$, $L^*$, $\hat{L}_1$, and $\hat{L}_2$, we train $f^*$ on $\mathcal D$, while $(L^*, \hat{L}_1, \hat{L}_2)$ are trained so that the projected components $L_1^*,L_2^*$ obtained from~\eqref{eq:L_projection} satisfy $L_1^*(\chi_1,\chi_2)\equiv L^*(\chi_1)$ and $L_2^*(\chi_1,\chi_2)\equiv L^*(\chi_2)$. To avoid numerical instability when $\|\nabla V\|$ is small in~\eqref{eq:L_projection}, we clamp the denominator as $\max\!\big(\varepsilon_{\mathrm{proj}}, \|\nabla V\|^2\big)$ with a small constant $\varepsilon_{\mathrm{proj}} > 0$ as in~\cite{min2023}. We build an unlabeled nominal dataset $\mathcal{D}_{\chi} = \{ \chi_i = (x_i, u_i, y_i) \}_{i=1:N_{\chi}}$ for $(L^*, \hat{L}_1, \hat{L}_2)$ sampled from $\mathcal{X} \times \mathcal{U} \times \mathcal{Y}$. The loss components are: 
\begin{align}
    \mathcal{L}_{f^*} &= \frac{1}{N} \sum\limits_{(x, u, \dot{x}) \in \mathcal{D}} | f^*(x, u) \!-\! \dot{x} |^2, \\
    \mathcal{L}_{L^*} &= \frac{1}{N_{\chi}^2} \sum\limits_{\chi_1, \chi_2 \in \mathcal{D}_{\chi}} \left| \tilde{L}^*(\chi_1,\chi_2) \!-\! \begin{bmatrix} L^*(\chi_1) \\ L^*(\chi_2) \end{bmatrix} \right|^2. \notag \\
\end{align}
We then minimize the total loss
\begin{equation} \label{eq:total_loss}
    \mathcal{L} = \lambda_f \mathcal{L}_{f^*} + \lambda_{L^*} \mathcal{L}_{L^*}, 
\end{equation}
with nonnegative weights $\lambda_f$ and $\lambda_{L^*}$.


Minimizing~\eqref{eq:total_loss} drives the embedded pair $(L_1^*,L_2^*)$ that is guaranteed to meet~\eqref{eq:lyapunov_decrease_short} by construction toward a single injection $L^*$ with the consistency condition. In this procedure, $L^*$ and $(L_1^*,L_2^*)$ maintain the uniform approximator property. Conceptually, prior work on stability~\cite{kolter2019,min2023} restricts the dynamics class to models that satisfy the Lyapunov decrease condition and fits them to meet $f$. Similarly, we restrict the embedded injections $(L_1^*,L_2^*)$ to contracting ones (satisfying~\eqref{eq:lyapunov_decrease_short}) and fit them to meet the consistency condition. To elaborate, let us write $\Updelta_{L_1}(\chi_1, \chi_2) \coloneqq L^*(\chi_1) - L_1^*(\chi_1, \chi_2)$ and $\Updelta_{L_2}(\chi_1, \chi_2) \coloneqq L^*(\chi_2) - L_2^*(\chi_1, \chi_2)$.
%
Note that the inequality~\eqref{eq:lyapunov_embedded} can be rewritten as 
\begin{align} \label{eq:lyapunov_decrease_delta}
    \dot{V} &= \partial_{z_1}V^\top \! \left( f^*(z_1, u_1) + L^*(\chi_1) - \Updelta_{L_1}(\chi_1, \chi_2) \right) \notag \\
    & \quad + \partial_{z_2}V^\top \! \left( f^*(z_2, u_2) + L^*(\chi_2) - \Updelta_{L_2}(\chi_1, \chi_2) \right) \\
    & \leq -\alpha V(z_1, z_2) + \rho_u|u_1 - u_2|^2 + \rho_y|y_1-y_2|^2. \notag
\end{align}
Here, we view $\Updelta_{L_1}$ and $\Updelta_{L_2}$ as model errors that are minimized during the training process, and let $\Updelta_{L} \;\coloneqq\; \sup\limits_{\chi_1,\chi_2 \in \mathcal{X}\times\mathcal{U}\times\mathcal{Y}} \max\!\big\{|\Updelta_{L_1}(\chi_1,\chi_2)|,\ |\Updelta_{L_2}(\chi_1,\chi_2)|\big\}$. Now, recall that the $\delta$ISS property of system~\eqref{eq:observer_learned} means \emph{pairwise convergence}: trajectories $z_1$ and $z_2$ generated from inputs $(u_1,y_1)$ and $(u_2,y_2)$ that are close (in $\|\cdot\|_{\infty}$ sense) converge toward each other; the same interpretation applies to $z_1$ and $z_2$ in coupled system~\eqref{eq:coupled_observer}. Thus,~\eqref{eq:lyapunov_decrease_delta} implies that the pairwise convergence condition for the actual observer system~\eqref{eq:observer_learned} holds for a surrogate that is close to it (within the model--error bound $\Updelta_L$). Then the $\delta$ISS version of Statement~\ref{statement:01} is: 
\begin{statement} \label{statement:02}
If $\nabla V \neq 0$ except at $z_1 = z_2$, then the learned $\delta$ISS Lyapunov function guarantees pairwise convergence for a surrogate system that is close (in $\mathcal{L}_{\infty}$ sense, up to $\Updelta_L$) to the true observer system~\eqref{eq:observer_learned}, namely the coupled system~\eqref{eq:coupled_observer}.
\end{statement}
%
%
\subsection{Performance guarantees}
We analyze the performance of the learned observer~\eqref{eq:observer_learned}.
\begin{theorem} \label{thm:01}
    Let $M_i = \max\limits_{\mathcal{X} \times \mathcal{X}} \|\partial_{z_i} V(z_1, z_2)\|$, $i \in \{ 1, 2 \}$, $\Updelta_{f} \coloneqq \sup\limits_{\mathcal{X} \times \mathcal{U}} \| f(z, u) - f^*(z, u) \|$, and $\Updelta \coloneqq M_1\left(\Updelta_f + \Delta_{L_1}\right) + M_2\Delta_{L_2}$. Assume $x(t), z(t) \in \mathcal{X}$ and $\nabla V \neq 0$ if $z_1 \neq z_2$. 
    If an input $u$ generates true state $x(t)$ and output $y = h(x)$ from~\eqref{eq:system}, and the learned observer~\eqref{eq:observer_learned} generates estimate $z(t)$ from input $u'$ and output $y'$, then the following estimate holds for $V(t) = V(x(t), z(t))$ and all $t \geq 0$.
    \begin{align} \label{eq:performance_bound}
        |x(t) - z(t)| &\leq \sqrt{\frac{V(0)}{\varepsilon}} e^{-\frac{\alpha t}{2}} \!+\! \sqrt{\frac{\rho_u}{\alpha \varepsilon}} \left\|u \!-\! u'\right\|_{\infty} \notag \\
        & \quad +\! \sqrt{\frac{\rho_y}{\alpha \varepsilon}} \left\|y \!-\! y'\right\|_{\infty} + \sqrt{\frac{\Updelta}{\alpha \varepsilon}}. 
    \end{align}
    \hfill $\square$
\end{theorem}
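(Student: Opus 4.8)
The plan is to run the $\delta$ISS Lyapunov argument directly on $V(t):=V(x(t),z(t))$, using the by-construction decrease inequality \eqref{eq:embedded_ineq} (guaranteed by the projection \eqref{eq:L_projection}, equivalently \eqref{eq:lyapunov_embedded}) for the coupled field and lumping all model errors into a single constant forcing term. First I would instantiate the embedded pair along the two trajectories of interest: take $z_1=x(t)$ with $u_1=u$ and $y_1=y=h(x)$, and $z_2=z(t)$ with $u_2=u'$ and $y_2=y'$, writing $\chi_1=(x,u,y)$ and $\chi_2=(z,u',y')$. The key consequence of the consistency condition $L^*(z,u,h(z))\equiv 0$ is that $L^*(\chi_1)=L^*(x,u,h(x))=0$ along the true-state branch, so the definitions of $\Updelta_{L_1},\Updelta_{L_2}$ collapse to $L_1^*(\chi_1,\chi_2)=-\Updelta_{L_1}(\chi_1,\chi_2)$ and $L^*(\chi_2)-L_2^*(\chi_1,\chi_2)=\Updelta_{L_2}(\chi_1,\chi_2)$.

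Next I would differentiate $V(t)$ along the true dynamics $\dot x=f(x,u)$ for the first argument and along the observer \eqref{eq:observer_learned} for the second, then add and subtract the coupled field $f^*(\cdot,\cdot)+L_i^*(\chi_1,\chi_2)$:
\begin{align*}
\dot V(t) &= \partial_{z_1}V^\top\bigl[f^*(x,u)+L_1^*(\chi_1,\chi_2)\bigr]+\partial_{z_2}V^\top\bigl[f^*(z,u')+L_2^*(\chi_1,\chi_2)\bigr] \\
&\quad +\partial_{z_1}V^\top\bigl[(f(x,u)-f^*(x,u))+\Updelta_{L_1}(\chi_1,\chi_2)\bigr]+\partial_{z_2}V^\top\Updelta_{L_2}(\chi_1,\chi_2),
\end{align*}
with all derivatives evaluated at $(z_1,z_2)=(x(t),z(t))$. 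The first line is exactly the left side of the projected constraint \eqref{eq:embedded_ineq} at this point; since $x(t),z(t)\in\mathcal X$ and $\nabla V\neq 0$ whenever $z_1\neq z_2$, it is bounded by $-\alpha V(t)+\rho_u|u-u'|^2+\rho_y|y-y'|^2$. The second line is bounded in magnitude by $M_1\bigl(\Updelta_f+\Updelta_{L_1}\bigr)+M_2\Updelta_{L_2}\le\Updelta$, using $\|\partial_{z_i}V(x(t),z(t))\|\le M_i$, $\|f-f^*\|\le\Updelta_f$, and the uniform residual bounds. This yields the pointwise-in-time inequality $\dot V(t)\le -\alpha V(t)+\rho_u|u-u'|^2+\rho_y|y-y'|^2+\Updelta$.

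Finally I would close with the comparison step, identical to the proof of Proposition~\ref{prop:disturbance_gain} in Appendix~\ref{appendix:a}: since $|u(t)-u'(t)|\le\|u-u'\|_\infty$ and $|y(t)-y'(t)|\le\|y-y'\|_\infty$ for every $t$, the right side is dominated by $-\alpha V(t)+c$ with $c:=\rho_u\|u-u'\|_\infty^2+\rho_y\|y-y'\|_\infty^2+\Updelta$; integrating $\frac{d}{dt}\bigl(e^{\alpha t}V\bigr)\le c\,e^{\alpha t}$ gives $V(t)\le V(0)e^{-\alpha t}+c/\alpha$. Combining with $|x(t)-z(t)|^2\le V(t)/\varepsilon$ from \eqref{eq:lyapunov_lb} and using subadditivity of the square root, $\sqrt{p+q+r+s}\le\sqrt p+\sqrt q+\sqrt r+\sqrt s$, produces \eqref{eq:performance_bound}. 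I expect the only delicate points to be (i) the correct use of the consistency condition to collapse $L_1^*(\chi_1,\chi_2)$ to $-\Updelta_{L_1}$ on the true-state branch, so that $\Updelta_f$ and $\Updelta_{L_1}$ enter together with coefficient $M_1$ exactly as in the definition of $\Updelta$; and (ii) the familiar comparison-lemma technicality at instants where $V(t)=0$, i.e.\ on the diagonal $z_1=z_2$ where $\nabla V=0$ and the projection imposes nothing, which is handled as in Appendix~\ref{appendix:a} since there $V(t)=0$ forces $|x(t)-z(t)|=0$ and the bound holds trivially. Beyond this bookkeeping I do not anticipate a substantive obstacle.
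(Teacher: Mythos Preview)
Your proposal is correct and follows essentially the same argument as the paper: you differentiate $V(x(t),z(t))$, use the consistency condition $L^*(x,u,h(x))=0$ to identify $L_1^*(\chi_1,\chi_2)=-\Updelta_{L_1}$, split off the coupled field to invoke the projected decrease \eqref{eq:lyapunov_embedded}, bound the residual by $M_1(\Updelta_f+\Updelta_{L_1})+M_2\Updelta_{L_2}=\Updelta$, and close via Gr\"onwall and square-root subadditivity exactly as in Appendix~\ref{appendix:a}. Your notation $(\chi_1,\chi_2)$ versus the paper's $(\chi,\chi')$ and your explicit remark on the diagonal case $V(t)=0$ are the only cosmetic differences.
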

\begin{proof}
    Let $\chi' = (z, u', y')$. From $L^*(x,u,h(x))=0$, 
    \begin{align}
        &\dot V = \partial_{z_1}V^\top\!\left[f(x,u)+L^*(x,u,h(x))\right] \notag \\
        &\qquad \qquad + \partial_{z_2}V^\top\!\left[f^*(z,u')+L^*(z,u',y')\right] \notag \\
        &= \partial_{z_1}V^\top\!\left[f^*(x,u)+L_1^*(\chi,\chi')+\Delta_f(x,u)+\Delta_{L_1}(\chi,\chi')\right] \notag \\
        &\quad + \partial_{z_2}V^\top\!\left[f^*(z,u')+L_2^*(\chi,\chi')+\Delta_{L_2}(\chi,\chi')\right] \notag \\
        &\le \partial_{z_1}V^\top\!\!\left[f^*(x,u)\!+\!L_1^*(\chi,\chi')\right] \!+\! \partial_{z_2}V^\top\!\!\left[f^*(z,u') \! + \! L_2^*(\chi,\chi')\right] \notag \\
        & \quad + M_1\!\left(|\Delta_f| \! + \! |\Delta_{L_1}|\right) \!+\!M_2|\Delta_{L_2}| \notag \\
        &\le -\alpha V(x,z) + \rho_u\lvert u-u'\rvert^2 + \rho_y\lvert y-y'\rvert^2 \notag \\
        &\quad + M_1\!\left(|\Delta_f| \! + \! |\Delta_{L_1}|\right)+M_2|\Delta_{L_2}|.
    \end{align}
    Then~\eqref{eq:performance_bound} follows along the same lines of the proof of Proposition~\ref{prop:disturbance_gain}.
\end{proof}
\begin{remark}
    In practice, we add an auxiliary penalty $\mathcal{L}_{\nabla V} = \frac{1}{N_{\chi}^2} \sum\limits_{x_1, x_2} \text{ReLU}\left(\varepsilon_{proj} - \|\nabla V(x_1, x_2)\|^2\right)$ to discourage small gradient norms away from the diagonal $S = \{(z_1,z_2): z_1=z_2\}$. This encourages $\|\nabla V\|^2 \ge \varepsilon_{\mathrm{proj}}$ outside a small neighborhood of $S$. On that region, the clamped projection in \eqref{eq:L_projection} guarantees the decrease condition \eqref{eq:lyapunov_decrease_short}. Thus, certificate verification~\cite{dawson2023} reduces to finding $\Updelta$ and $r>0$ such that $|z_1-z_2|>r \Rightarrow \|\nabla V(z_1,z_2)\|^2 \ge \varepsilon_{\mathrm{proj}}$, similar to~\cite{min2023}. Then the performance bound \eqref{eq:performance_bound} can be written as $\lvert x(t)-z(t)\rvert \;\le\;\max\!\left\{\, r,\;\text{(RHS of~\eqref{eq:performance_bound})}\right\}$.
\end{remark}
\begin{remark}

    Under the assumptions of Theorem~\ref{thm:01}, for any observer trajectories driven by $(u_1,y_1)$ and $(u_2,y_2)$, 
    \[
    \begin{aligned}
    |z_1(t)-z_2(t)|
    &\le \max\Bigl\{\, r,\;
    \sqrt{\tfrac{\rho_u}{\alpha \varepsilon}}\;\|u_1-u_2\|_{\infty} \\
    &+ \sqrt{\tfrac{\rho_y}{\alpha \varepsilon}}\;\|y_1-y_2\|_{\infty} + \sqrt{\tfrac{M_1\Updelta_{L_1}+M_2\Updelta_{L_2}}{\alpha \varepsilon}}
    \Bigr\},
    \end{aligned}
    \]
    which implies practical $\delta$ISS~\cite{jiang1994}. Consequently, if local observers are designed for the interconnection in Section~\ref{sec:prob_statement}, then each local observer is practically $\delta$ISS. This implies that the interconnected observer~\eqref{eq:distributed_observer} is practically $\delta$ISS as well. Moreover, since the practical $\delta$ISS residuals (ultimate bounds) of the local observers have already been derived above, the residual for the interconnected observer \eqref{eq:distributed_observer} follows directly from the practical small-gain composition in~\cite{jiang1994}. As all ingredients are in place, we omit the explicit construction for brevity. \hfill $\square$
\end{remark}
%
\section{Numerical Demonstration} \label{sec:demo}
We validate the proposed framework on three scenarios of increasing complexity: (i) the uncontrolled Lorenz attractor, (ii) bicycle path--following under open--loop input, and (iii) an interconnected FitzHugh--Nagumo system~\cite{rankovic2011}. All models ($f^*$, $g_L$, $g_V$, $\hat L_1$, $\hat L_2$) are 3-layer MLPs (128–256–128, $\tanh$). Loss weights are $\lambda_f{=}1$, $\lambda_{L^*}{=}10$, $\lambda_{\nabla V}{=}100$. For $g_V$, we add $\tanh$ activation function multiplied by $100$ at the output layer to scale the value of the $\delta$ISS Lyapunov function. We use the smoothed ReLU function with threshold $d{=}10^{-3}$~\cite{kolter2019} for $\sigma$. Training uses Adam (lr $10^{-3}$, batch size $1000$) for $1000$ epochs with $\varepsilon_V{=}0.2$, $\alpha{=}1.0$, and $\varepsilon_{\mathrm{proj}}{=}10^{-4}$. Integrations use RK4 with step size $0.005$. 
%
%
\subsection{Uncontrolled Lorenz attractor}
We start with the uncontrolled Lorenz attractor: 
\begin{align}
    \dot{x}_1 &= \sigma (x_2 - x_1), \notag \\
    \dot{x}_2 &= \rho x_1 - x_2 - x_1 x_3, \\
    \dot{x}_3 &= -\beta x_3 + x_1 x_2, \notag
\end{align}
with $\sigma = 10$, $\rho = 28$, and $\beta = 8/3$. We measure $y = x_1$. The training set $\mathcal{D}$ is uniformly sampled from $\mathcal{X} = [-25, 25] \times [-25, 25] \times [0, 50]$ with $N = 10^5$, and $\mathcal{D}_{\chi}$ is sampled from $\mathcal{X} \times \mathcal{Y}$ with $\mathcal{Y} = [-25, 25]$ and $N_{\chi} = 10^6$. We use $\rho_y = 2.0$ for the disturbance gain. We compare our learned observer against the Deep--KKL observer of~\cite{peralez2024deep}, with forgetting factor $a = 0.99$.
\begin{figure}
    \centering
    \includegraphics[width=1\linewidth]{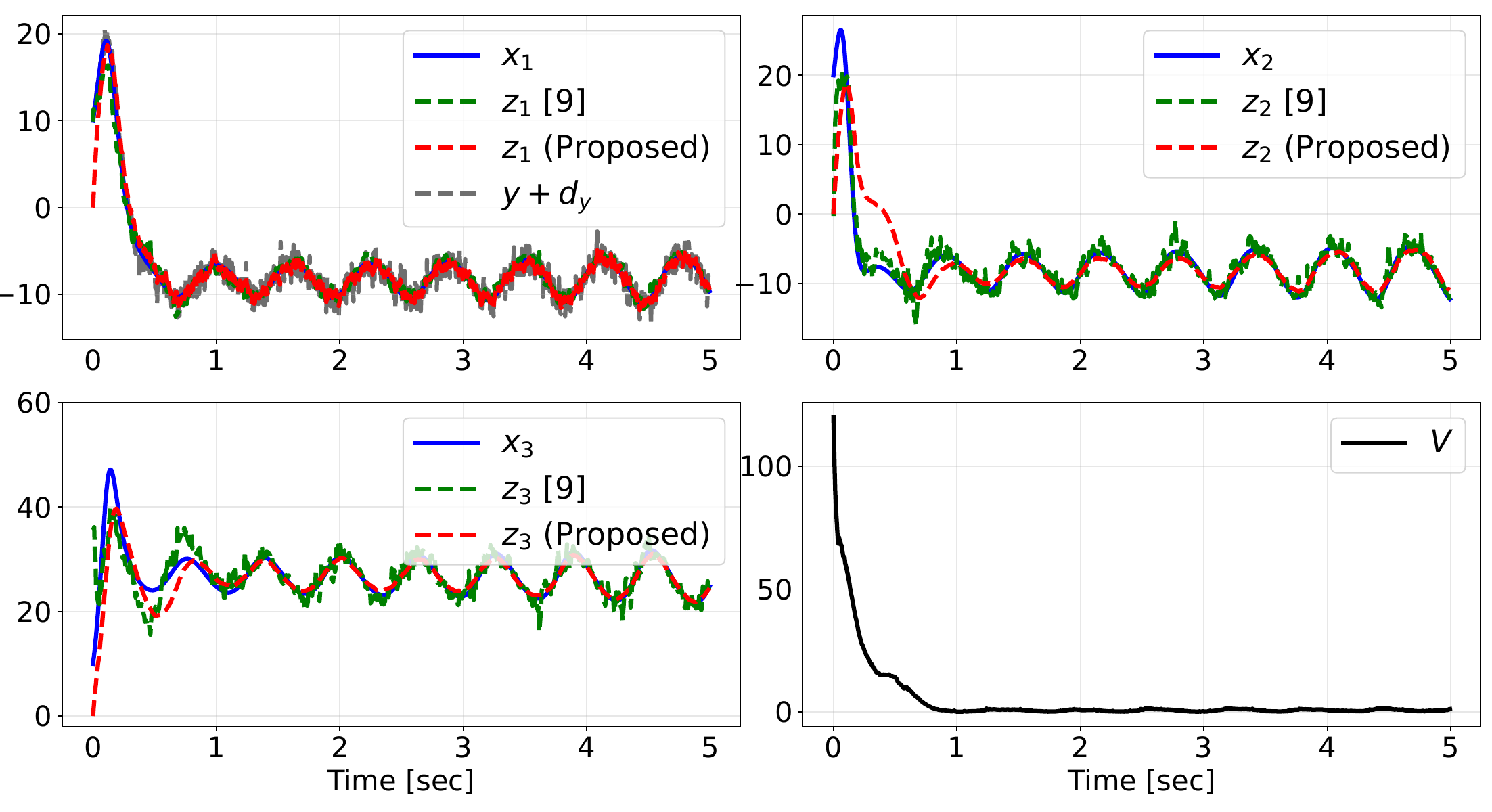}
    \caption{Observer results for Lorenz oscillator with noisy measurement $y + d_y$ with $d_y \sim \mathcal{N}(0,1)$.}
    \label{fig:lorenz_noise}
\end{figure}
Figure~\ref{fig:lorenz_noise} illustrates the performance under additive measurement noise $d_y \sim \mathcal{N}(0,1)$. The true initial state is $x(0) \!=\! [10,20,10]$, while we initialize our observer as $z(0) \!=\! [0,0,0]$. Note that~\cite{peralez2024deep} includes an initialization network $\phi$ that estimates the latent initial state from the first measurement.

Over $t \in [0,5]$ under noise--free setting, the root--mean--square error (RMSE) of our approach is $3.398$ versus $3.142$ for Deep--KKL, which is slightly higher due to the different initial value setting. Restricting to $[1,5]$ (after transients decay), our approach achieves an RMSE of $0.9841$ compared with $1.647$ for Deep--KKL, indicating better steady--state accuracy. To assess robustness, we inject zero--mean Gaussian noise into the measurement, i.e., both observers receive $y+d_y$ with $d_y\sim\mathcal{N}(0,1)$. Over $[0,5]$, our method attains an RMSE of $4.007$ while Deep--KKL records $5.284$. Over $[1,5]$, the respective RMSEs are $2.475$ and $2.587$, showing improved noise robustness for the proposed observer.
\subsection{Bicycle path--following with an open--loop input}
The second example is a nonautonomous system with non--affine dependency on the input. Prior results~\cite{miao2023learning,peralez2024deep} on KKL observers either focus on autonomous systems or require model knowledge for input handling; a model--free treatment of non--affine inputs was not shown. A bicycle of constant speed $v$ follows a path of constant curvature $\kappa$. Let $d_e$ be the lateral deviation and $\theta_e$ the heading error: 
\begin{align}
    \dot{d}_e &= v \sin(\theta_e), \notag \\
    \dot{\theta}_e &= \frac{v}{L} \tan(u) - \frac{v \kappa \cos(\theta_e)}{1 - \kappa d_e},
\end{align}
with $\kappa = 1$, $v = 6$, $L = 1$. We measure $y = d_e$. As a nominal open--loop input, we use $u(t) = \arctan\!\big(\bar{q} + \varepsilon \sin(\omega t)\big)$ with $\bar{q}=1$, $\varepsilon = 0.2$, and $\omega = 0.5$. The training set $\mathcal{D}$ is uniformly sampled from $\mathcal{X}\times\mathcal{U}$ with $\mathcal{X}=[-0.8,0.8]\times[-0.8,0.8]$, $\mathcal{U}=[-0.4\pi,\,0.4\pi]$, and $N=2\times 10^4$. $\mathcal{D}_\chi$ is sampled from $\mathcal{X} \times \mathcal{U} \times \mathcal{Y}$ with $\mathcal{Y} = [-0.8,0.8]$ and $N_\chi=10^5$. We use $\rho_u = \rho_y = 5$ for the disturbance gains. The true initial state is randomly sampled from $\mathcal{X}$, and we initialize the observer at $z(0)=[0.8,\,0.8]$ (a corner of the training domain).
\begin{figure}
    \centering
    \includegraphics[width=1\linewidth]{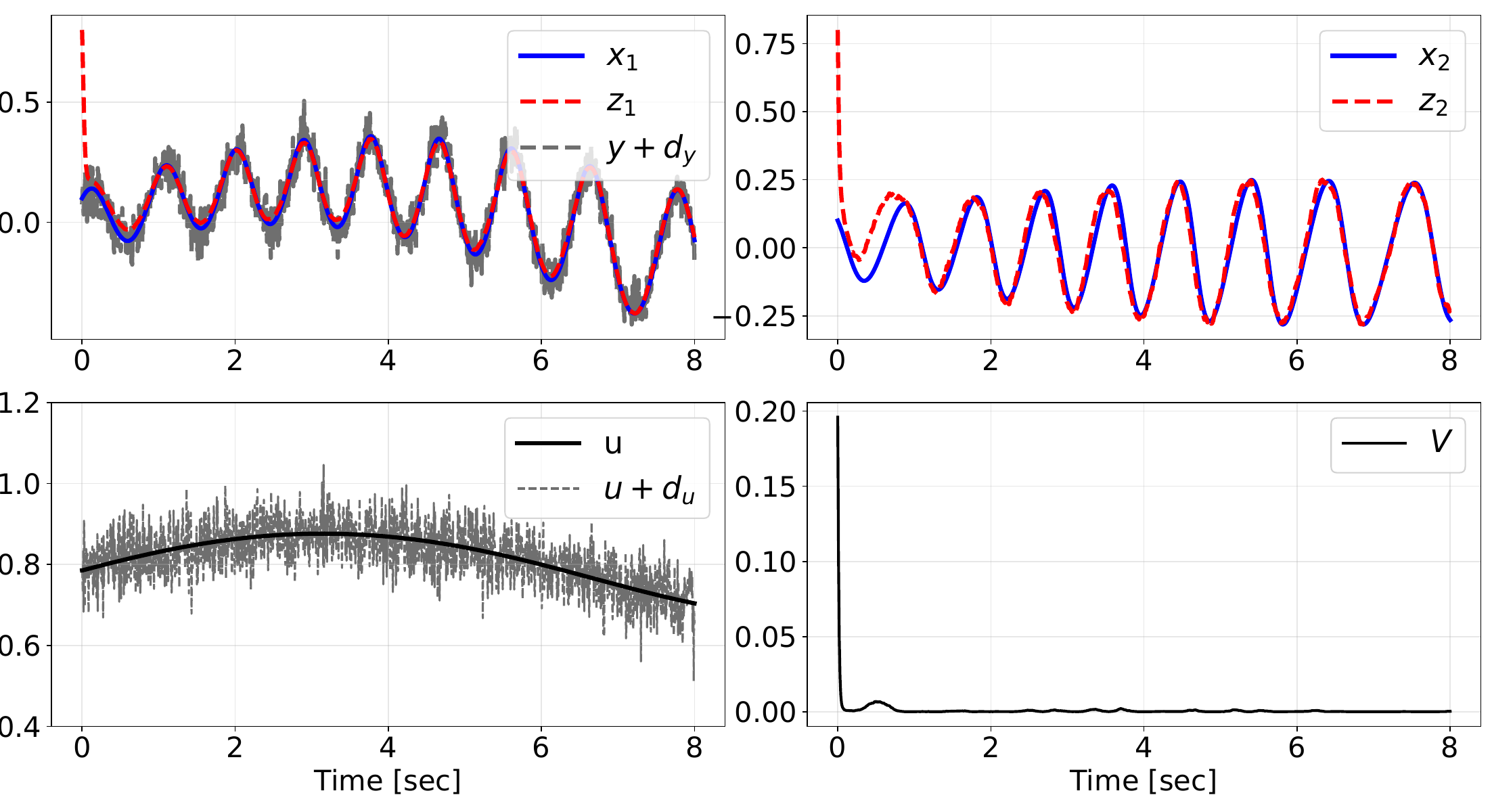}
    \caption{Observer results for bicycle path--following dynamics with noises $d_y \sim \mathcal{N}(0, 0.05)$ and $d_u \sim \mathcal{N}(0, 0.05)$ entering $y$ and $u$, respectively.}
    \label{fig:bicycle_noise}
\end{figure}

Under noise--free setting, our learned observer attains RMSE of $0.0592$ over $t\in[0,8]$, and $0.0137$ over stabilized window $[1,8]$. To assess robustness, we provide $y+d_y$ and $u+d_u$ to the observer, where $d_y\sim\mathcal{N}(0,0.05)$ and $d_u\sim\mathcal{N}(0,0.05)$. Figure~\ref{fig:bicycle_noise} illustrates the performance under measurement and input noises. On $t\in[0,8]$, the RMSE is $0.0808$, and $0.0587$ on $[1,8]$.
\subsection{Interconnected FitzHugh-Nagumo system}
Finally, we validate the distributed observer on an interconnected FitzHugh--Nagumo (FHN) pair~\cite{rankovic2011}. Two coupled neuronal subsystems receive inputs $u_a$ and $u_b$: 
\begin{align}
    \dot{x}_{a,1} &= x_{a,1} \!-\! \frac{1}{3}x_{a,1}^3 \!-\! x_{a,2} \!+\! \kappa_a u_a \!+\! R_a I_a \!+\! c \mathcal{F}(x_{a,1}, x_{b,1}), \notag \\
    \dot{x}_{a,2} &= \frac{1}{\tau}(x_{a,1} + a - b x_{a,2}), \\
    \dot{x}_{b,1} &= x_{b,1} \!-\! \frac{1}{3}x_{b,1}^3 \!-\! x_{b,2} \!+\! \kappa_b u_b \!+\! R_b I_b \!+\! c \mathcal{F}(x_{b,1}, x_{a,1}), \notag \\
    \dot{x}_{b,2} &= \frac{1}{\tau}(x_{b,1} + a - b x_{b,2}), \notag
\end{align}
where $\mathcal{F}(x_i, x_j) = -\frac{x_i - V_s}{1 + \exp(-k(x_j - \theta_s))} - \frac{V_s}{1 + \exp(k \theta_s)}$~\cite{rankovic2011} with parameters $a \!=\! 0.7$, $b \!=\! 0.8$, $\tau \!=\! 12.5$, $R_a \!=\! R_b \!=\! 1$, $I_a \!=\! 0.8$, $I_b \!=\! 1.6$, $\kappa_a \!=\! 0.1$, $\kappa_b \!=\! 0.2$, $c \!=\! 1$, $V_s \!=\! 1.5$, $k \!=\! 6$, and $\theta_s \!=\! -0.25$. Each subsystem outputs its membrane potential $y_a \!=\! x_{a,1}$ and $y_b \!=\! x_{b,1}$, where a common broadcast input $u(t) \!=\! u_a(t) \!=\! u_b(t) \!=\! \cos t$ is considered. We construct distributed observers via the two--step procedure in Section~\ref{subsec:robust_observers}. Local observers are trained independently from $\mathcal{D}_a = \{(x_{a}, x_{b,1}, u_a, \dot{x}_a)\}$ and $\mathcal{D}_b = \{(x_{a,1}, x_b, u_b, \dot{x}_b)\}$ drawn from $\mathcal{X} \times \mathcal{U} = [-2.5, 2.5]^4$ with $N_a \!=\! N_b \!=\! 2\times10^4$. $\mathcal{D}_{\chi_a}$ and $\mathcal{D}_{\chi_b}$ are drawn from $\mathcal{X} \times \mathcal{U} \times \mathcal{Y}$ with $\mathcal{Y} \!=\! [-2.5,2.5]$ and $N_{\chi_a} \!=\! N_{\chi_b}=10^5$. For disturbance gains, we set $\rho_{y_a} \!=\! \rho_{y_b} \!=\! \rho_{u_a} \!=\! \rho_{u_b} \!=\! 2$ and $\rho_{z_a} \!=\! \rho_{z_b} \!=\! 0.15$ so that the small--gain condition~\eqref{eq:small_gain} holds. The true initial state is randomly sampled from $\mathcal{X}$ for the both subsystems, and we initialize the observer at $z_a(0) \!=\! z_b(0) \!=\! [0,0]$.
\begin{figure}
    \centering
    \includegraphics[width=1\linewidth]{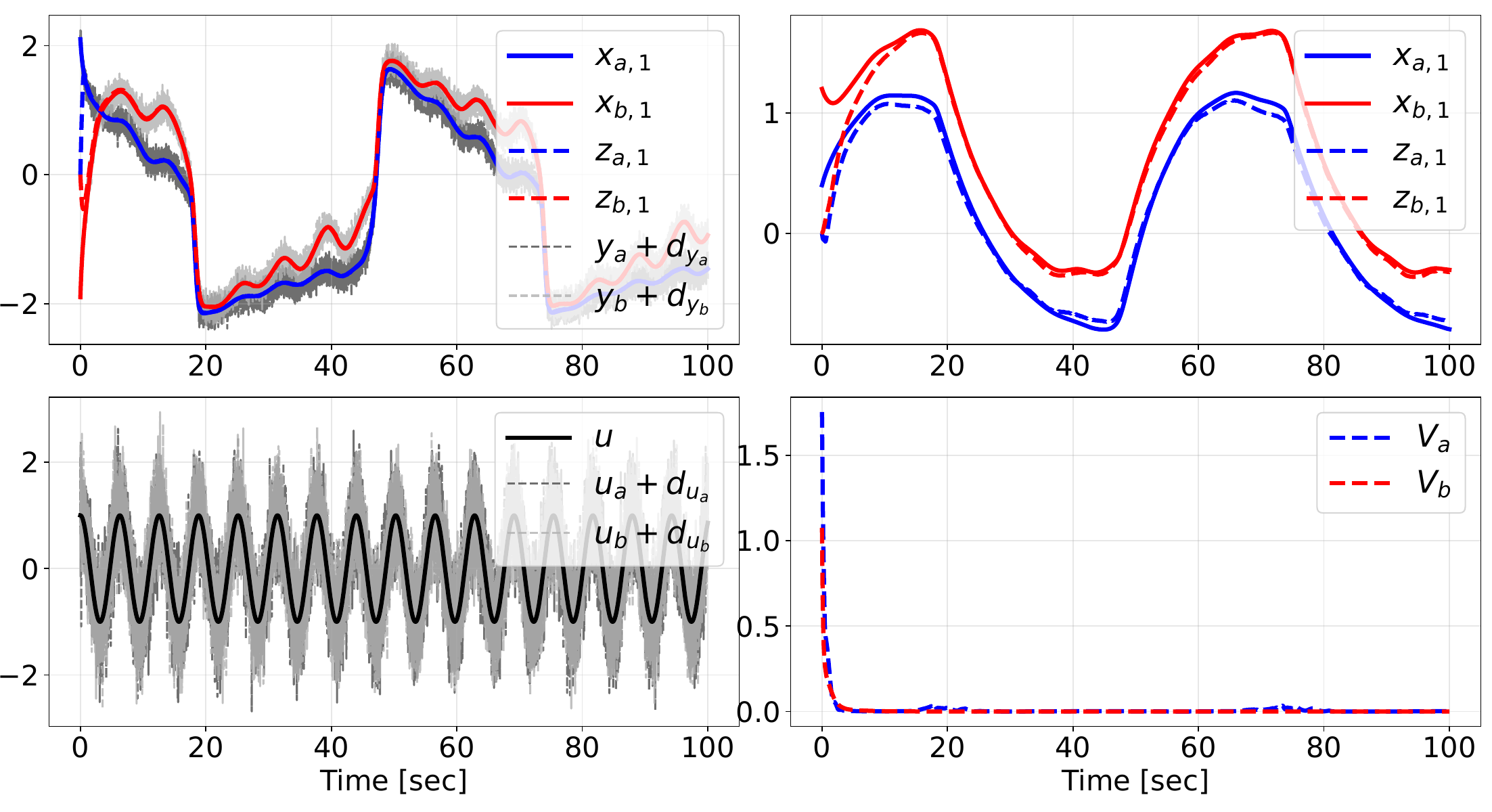}
    \caption{Distributed observer results on the interconnected FHN system with noisy measurements $y_a+d_{y_a}$ and $y_b+d_{y_b}$ ($d_{y_a},d_{y_b}\!\sim\!\mathcal{N}(0,0.1)$) and a broadcast input $u+d_u$ with common noise $d_u\!\sim\!\mathcal{N}(0,0.5)$.}
    \label{fig:fhn_network}
\end{figure}

Over $t\in[0,100]$ under noise--free setting, the RMSEs are $0.1056$ for subsystem $a$ and $0.1699$ for subsystem $b$. We then provide $y_a+d_{y_a}$, $y_b+d_{y_b}$, and the broadcast input $u+d_u$ with $d_{y_a}, d_{y_b}\sim\mathcal{N}(0,0.1)$ and $d_u\sim\mathcal{N}(0,0.5)$ to local observers. Figure~\ref{fig:fhn_network} shows the performance of the distributed observer on $[0, 100]$ interval, where RMSE recorded $0.1209$ for subsystem $a$ and $0.1846$ for subsystem $b$.
%
\section{Conclusion}
We proposed a framework for learning robust full--state observers equipped with a $\delta$ISS Lyapunov certificate for unknown dynamics. Practical convergence of the learned observer is proved in terms of learning errors. This renders the proposed framework \emph{safety--compatible} with certificate--based controllers for safety--critical applications. We further presented a distributed observer design for two interconnected subsystems via small--gain arguments. A natural direction for future research is extending the method to networks with an arbitrary number of nodes and topologies.
%
%


\section*{APPENDIX}

\subsection{Proof of Proposition~\ref{prop:disturbance_gain}} \label{appendix:a}
\begin{proof}
    From Grönwall's inequality, it follows 
    \begin{align} \label{eqA:gronwall}
        &V(t) \le V(0)\,e^{-\alpha t}
        + \frac{1 \!-\! e^{-\alpha t}}{\alpha}
        \Big(\rho_u\|u_1 \!-\! u_2\|_\infty^{2} \!+\! \rho_y\|y_1 \!-\! y_2\|_\infty^{2}\Big) \notag \\
        &\le V(0)\,e^{-\alpha t}
        + \frac{\rho_u}{\alpha}\|u_1 \!-\! u_2\|_\infty^{2}
        + \frac{\rho_y}{\alpha}\|y_1 \!-\! y_2\|_\infty^{2}.
    \end{align}
    Since $V(z_1, z_2) \geq \varepsilon |z_1 - z_2|^2$, it follows from~\eqref{eqA:gronwall} that 
    \begin{align}
        & |z_1 \!-\! z_2| \leq \sqrt{\frac{V(0)}{\varepsilon} e^{-\alpha t} \!+\! \frac{\rho_u}{\alpha\varepsilon}\left\|u_1 - u_2\right\|_{\infty}^2 \!+\! \frac{\rho_y}{\alpha\varepsilon}\left\|y_1 - y_2\right\|_{\infty}^2} \notag \\
        & \leq \sqrt{\frac{V(0)}{\varepsilon}} e^{-\frac{\alpha t}{2}} \!+\! \sqrt{\frac{\rho_u}{\alpha \varepsilon}} \left\|u_1 \!-\! u_2\right\|_{\infty} \!+\! \sqrt{\frac{\rho_y}{\alpha \varepsilon}} \left\|y_1 \!-\! y_2\right\|_{\infty},
    \end{align}
    where the property $\sqrt{a + b + c} \leq \sqrt{a} + \sqrt{b} + \sqrt{c}$ for nonnegative $a$, $b$, and $c$ is used.
\end{proof}



\bibliographystyle{IEEEtran}
\bibliography{Ref_Papers}

\end{document}